\DeclareMathOperator*{\argmin}{arg\,min}
\newcommand{\be}{\begin{equation}}
\newcommand{\ee}{\end{equation}}
\newcommand{\bse}{\begin{subequations}}
	\newcommand{\ese}{\end{subequations}}
\newcommand{\bewn}{\begin{equation*}}
\newcommand{\eewn}{\end{equation*}}
\newcommand{\bbmat}{\begin{bmatrix}} 
	\newcommand{\ebmat}{\end{bmatrix}}
\newcommand{\bd}{\begin{displaymath}}
\newcommand{\ed}{\end{displaymath}}
\newcommand{\bea}{\begin{eqnarray}}
\newcommand{\eea}{\end{eqnarray}}
\newcommand{\ba}{\begin{array}}
	\newcommand{\ea}{\end{array}}
\newcommand{\baa}{\begin{array}{ll}}
	\newcommand{\eaa}{\end{array}}
\newcommand{\ds}{\displaystyle}
\newcommand{\bc}{\begin{center}}
	\newcommand{\ec}{\end{center}}
\newcommand{\ben}{\begin{enumerate}}
	\newcommand{\een}{\end{enumerate}}
\newcommand{\bi}{\begin{itemize}}
	\newcommand{\ei}{\end{itemize}}
\newcommand{\bt}{\begin{tabular}}
	\newcommand{\et}{\end{tabular}}
\newcommand{\bte}{\begin{table}}
	\newcommand{\ete}{\end{table}}
\newcommand{\norm}[1]{\left\lVert#1\right\rVert}   
\renewcommand\paragraph{\@startsection{paragraph}{4}{\z@}%
	{-2.5ex\@plus -1ex \@minus -.25ex}%
	{1.25ex \@plus .25ex}%
	{\normalfont\normalsize\bfseries}}
\newtheorem{proposition}{Proposition}
\newtheorem{theorem}{Theorem}
\newtheorem{lemma}[theorem]{\textbf{Lemma}}
\newtheorem{assumption}{\textbf{Assumption}}
\newtheorem{problem}{\textbf{Problem}}
\newtheorem{definition}{\textbf{Definition}}
\newtheorem{condition}{\textbf{Condition}}
\newcommand{\bR}{\mathbb{R}}
\newcommand{\bS}{\mathbb{S}}
\newcommand{\bC}{\mathbb{C}}
\newcommand{\calA}{\mathcal{A}}
\newcommand{\calB}{\mathcal{B}}
\newcommand{\calD}{\mathcal{D}}
\newcommand{\calE}{\mathcal{E}}
\newcommand{\calF}{\mathcal{F}}
\newcommand{\calG}{\mathcal{G}}
\newcommand{\calP}{\mathcal{P}}
\newcommand{\calR}{\mathcal{R}}
\newcommand{\calT}{\mathcal{T}}
\newcommand{\calS}{\mathcal{S}}
\newcommand{\calV}{\mathcal{V}}
\newcommand{\calZ}{\mathcal{Z}}
\newcommand{\scriptF}{\mathscr{F}}
\newcommand{\pdydx}[2]{\frac{\partial{#1}}{\partial{#2}}}
\newcommand{\doublepdydx}[2]{\frac{\partial^2{#1}}{\partial{{#2}^2}}}
\newcommand{\abs}[1]{\left |#1\right |}
\title{\LARGE \bf
	Herding an Adversarial Swarm in Three-dimensional Spaces
}
\author{Weifan Zhang, Vishnu S. Chipade and Dimitra Panagou
	\thanks{The authors are with the Department of Aerospace Engineering,
		University of Michigan, Ann Arbor, MI, USA;
		{\tt\small (weifanz,vishnuc,dpanagou)@umich.edu}}
	\thanks{This work has been funded by the Center for Unmanned Aircraft Systems (C-UAS), a National Science Foundation Industry/University Cooperative Research Center (I/UCRC) under NSF Award No. 1738714 along with significant contributions from C-UAS industry members.}
}
\begin{document}
	\maketitle
	\thispagestyle{empty}
	\pagestyle{empty}
	
	\begin{abstract}
		
		This paper presents a defense approach to safeguard a protected area against an attack by a swarm of adversarial agents in three-dimensional (3D) space. We extend our 2D `StringNet Herding' approach, in which a closed formation of string-barriers is established around the adversarial swarm to confine their motion and herd them to a safe area, to 3D spaces by introducing 3D-StringNet. 3D-StringNet is a closed 3D formation of triangular net-like barriers. We provide a systematic approach to generate three types of 3D formations that are used in the 3D herding process and modifications to the finite-time convergent control laws developed in our earlier work. Furthermore, for given initial positions of the defenders, we provide conditions on the initial positions of the attackers for which the defenders are guaranteed to gather as a specified formation at a position on the shortest path of the attackers to the protected area before attackers reach there. The approach is investigated in simulations.
	\end{abstract}
	
	\section{Introduction}
	
	A swarm of multiple robots can in principle perform certain tasks more effectively than one individual robot \cite{bayindir2016review}. However, the fast advancement of swarm technology raises concerns with respect to safety. For instance, autonomous robots in the proximity of protected area (e.g., safety-critical infrastructure) may in some cases be considered as a threat (e.g., aerial robots close to airports or stadiums). In our prior work \cite{chipade2019swarmherding,chipade2020swarmherding}, we developed a method called 'StringNet Herding' in which a group of defending agents (defenders) herds the adversarial swarm away from the protected area by enclosing it in a closed formation of string-like barriers, called StringNet. We assumed that the agents of the adversarial swarm (attackers) are risk-averse and tend to move away from the 2D StringNet formation formed by defending agents, and that the motion of all the agents is constrained to a plane of a fixed altitude. However, in practice, the motion of an attacking aerial swarm does not have to be restricted to a plane. Therefore, in this paper, we extend the StringNet approach to 3D environments.
	
	\subsubsection{Related work} 
	Earlier methods in the literature, namely: n-wavefront herding \cite{paranjape2018robotic}, potential field approach \cite{pierson2015bio}, potential cage approach \cite{varava2017herding}, switched system approach \cite{licitra2017single} that are cited in \cite{chipade2020swarmherding} also provide extensions to 3D environments or some hint to extend the presented 2D laws to 3D environments. However, the 3D extensions are limiting due to: 1) dependence on knowing the model of the attackers' motion, 2) lack of modeling of the attackers' intent to reach or attack a certain protected area, 3) simplified motion and environment models. 
	
	In \cite{kim2018three}, a group of aerial robots tows a capture net to herd a maneuvering UAV in a 3D environment. It is proved that the 3D team is able to capture its target in a finite time. 
	However, the capture net is an open surface in 3D space, so the target UAV still has a chance to escape during the herding process. 
	
	\subsubsection{Overview}
	In this paper, we build on the 2D StringNet herding approach \cite{chipade2020swarmherding} under the similar assumption of risk-averse adversarial attackers, i.e., attackers that adjust their course to avoid obstacles. We propose an approach for 3D-StringNet herding, where 3D-StringNet is a formation of expandable, triangular net-like barriers formed by a group of defenders (Fig.~\ref{fig:3D_StringNet_formation}).  Similar to 2D-Stringnet herding, 3D-StringNet herding also consists of four phases: 1) gathering, 2) seeking, 3) enclosing and 4) herding. We design three 3D formations of the defenders namely planar, hemispherical and spherical that are required to be achieved in the phases discussed above in order to effectively enclose the attackers and herd them to a safe area. The control laws designed in \cite{chipade2020swarmherding} are extended to 3D spaces by considering 3D rigid body dynamics. The `3D-StringNet Herding' thus addresses the aforementioned issues similar to its 2D equivalent. We also provide conditions on the initial positions of the attackers for which the defenders are able to achieve a specified formation at a point on the expected path (shortest path to the protected area) of the attackers before the attackers could reach that point. We provide a convex optimization formulation to quickly find these conditions for a given direction from which the attackers are approaching.
	
	In summary, the design of three 3D formations, appropriate modifications to the 2D herding control laws \cite{chipade2020swarmherding}, and the conditions on the initial positions of the attackers for defenders' guaranteed gathering are the main contributions of this paper compared to our previous work.

	\subsubsection{Structure of the paper}Section \ref{sec:preliminary} describes the mathematical modeling and the problems studied. The details of the 3D herding formations are discussed in Section \ref{sec:formation}, while the modifications to the 2D herding approach are provided in Section \ref{sec:herding}. Conditions on the attackers' initial positions for guaranteed gathering are provided in Section \ref{sec:dominance_region}. Simulation results and conclusions are reported in Section \ref{sec:simulations} and \ref{sec:conclusions}.
	
	\section{Modeling and Problem Statement}\label{sec:preliminary}
	\textit{Notation}:  Euclidean norm is denoted by $\norm{\cdot}$. Absolute value is denoted by $\abs{\cdot}$. $\calB_{\rho}(\mathbf{r}_{c})=\{\mathbf{r} \in \bR^3| \norm{\mathbf{r}-\mathbf{r}_{c}}\le \rho\}$ denotes a ball of radius $\rho>0$ centered at the point $\mathbf{r}_c \in \bR^3$. A saturation function $\bm{\Omega}_{\bar{u}}: \bR^2\rightarrow \bR^2$ is defined as: $\bm{\Omega}_{\bar{u}}(\mathbf{g}) =\min(\norm{\mathbf{g}},\bar{u}) \frac{\mathbf{g}}{\norm{\mathbf{g}}}$. We use characters $g$, $s$, $e$, $h$ as subscripts or superscripts to denote gathering, seeking, enclosing and herding phase, respectively. Characters $sb$, $sn$ used as subscripts denote string barrier and StringNet, respectively. Characters $op$, $cl$ used as superscript denote open and closed, respectively. Similarly, characters $sp$, $hs$, $pl$ used as subscript or superscript denote spherical, hemispherical and planar, respectively.
	
	There are $N_a$ attackers denoted as $\calA_i$, $i \in I_a= \{1,2,...,N_a\}$ and $N_d$ defenders denoted as $\calD_j$, $j \in I_d= \{1,2,...,N_d\}$. The protected area $\calP \subset  \mathbb{R}^3$ is defined as $\calP=\{\textbf{r} \in \bR^3 \;| \; \norm{\textbf{r}-\textbf{r}_p}\le \rho_p\}$, and the safe area $\calS \subset  \mathbb{R}^3$ is defined as $\calS=\{\textbf{r} \in \bR^3 \; | \; \norm{\textbf{r}-\textbf{r}_{s}}\le \rho_{s}\}$, where $(\textbf r_p, \rho_p)$ and $(\textbf r_{s}, \rho_{s})$ are the centers and radii of the corresponding areas, respectively. The agents $\calA_i$ and $\calD_j$ are modeled as spheres of radii $\rho_a$ and $\rho_d \le \rho_a$, respectively and move under double integrator (DI) dynamics with quadratic drag: 
	\begin{subequations} \label{eq:attackDyn1}
		\abovedisplayskip=1pt
		\belowdisplayskip=3pt
		\begin{gather}
			\dot{\textbf{r}}_{ai}
			=\textbf{v}_{ai}, \quad \quad 
			\dot{\textbf{v}}_{ai}
			=\textbf{u}_{ai}-C_{D} \norm{\textbf{v}_{ai}}\textbf{v}_{ai};\\
			\dot{\textbf{r}}_{dj}
			=\textbf{v}_{dj}, \quad \quad 
			\dot{\textbf{v}}_{dj}
			=\textbf{u}_{dj}-C_{D} \norm{\textbf{v}_{dj}}\textbf{v}_{dj}; \label{eq:defendDyn1}\\
			\norm{\mathbf{u}_{ai}} \le \bar{u}_a, \quad
			\norm{\mathbf{u}_{dj}} \le \bar{u}_d; 
		\end{gather}
	\end{subequations}
	where $C_D>0$ is the known, constant drag coefficient; $\mathbf{r}_{ai}=[x_{ai}\; y_{ai}\;z_{ai}]^T \in \bR^3$ and $\mathbf{r}_{dj}=[x_{dj}\; y_{dj}, \;z_{dj}]^T \in \bR^3$ are the position vectors of $\calA_i$ and $\calD_j$, respectively; $\mathbf{v}_{ai}=[v_{x_{ai}}\; v_{y_{ai}}\; v_{z_{ai}}]^T \in \bR^3$, $\mathbf{v}_{dj}=[v_{x_{dj}}\; v_{y_{dj}}\;v_{z_{dj}}]^T \in \bR^3$ are the velocity vectors, respectively, and $\mathbf{u}_{ai}=[u_{x_{ai}}\; u_{y_{ai}} \;u_{z_{ai}}]^T \in \bR^3$, $\mathbf{u}_{dj}=[u_{x_{dj}}\; u_{y_{dj}} \;u_{z_{dj}}]^T \in \bR^3$ are the accelerations (the control inputs), respectively. This model poses an inherent speed bound on each agent with limited acceleration control, i.e., $v_{ai}=\norm{\mathbf{v}_{ai}}<\bar{v}_a=\sqrt{\frac{\bar{u}_a}{C_d}}$ and $v_{dj}=\norm{\mathbf{v}_{dj}}<\bar{v}_d=\sqrt{\frac{\bar{u}_d}{C_d}}$. The defenders are assumed to be faster than the attackers, i.e., $\bar{u}_a < \bar{u}_d$ (equivalently $\bar{v}_a < \bar{v}_d$).
	We also assume the following about the information available to the agents.
	\begin{assumption} The defenders have access to the position $\mathbf{r}_{ai}$ and velocity $\mathbf{v}_{ai}$ of the attacker $\calA_i$ that lies inside a circular sensing zone $\calZ_d=\{\mathbf r \in \mathbb{R}^3 |\; \norm{\mathbf{r}-{\mathbf{r}_{pa}}} \le \varrho_d\}$ for all $i \in I_a$, {where $\varrho_d>0$ is the radius of the defenders' sensing zone}. Every attacker $\calA_i$ has a local sensing zone ${\calZ_{ai}}=\{\mathbf{r} \in \bR^3 \;| \; \norm{\mathbf{r}-\mathbf{r}_{ai}}\le \varrho_{ai} \}$, {where $\varrho_{ai}>0$ is the radius of the attacker $\calA_i$'s sensing zone}.
	\end{assumption}
	
	Attackers aim to reach the protected area $\calP$ while the defenders aim to herd the attackers to the safe area $\calS$ before the attackers reach the protected area. Attackers are assumed to stay within a circular connectivity region of radius $\rho_{ac}$ around the attackers' center of mass. To demonstrate the proposed 3D herding approach, we model the motion of the attackers using a leader-follower control strategy \cite{8360424} that uses potential functions, which however is not known to the defenders. We consider the following problems in this paper.
	\begin{problem}
		Design 3D formations of the defenders with minimum number of defenders to enclose the attackers and to herd them to $S$.
	\end{problem}	
	\begin{problem}
		Given the initial positions of the defenders $\mathbf{r}_{dj}(0)$, for all $j \in I_d$, provide conditions on the initial positions $\mathbf{r}_{ai}(0)$, for all $i \in I_a$, of the attackers for which the defenders are able to gather as a specified formation centered at a point on the expected path of the attackers before any attacker reaches the center of the formation.
	\end{problem}
	
	

	\section{3D-StringNet and 3D Formations}\label{sec:formation}
	In this section, we formally define 3D-StringNet and provide a systematic approach to obtain formations of the defenders to generate 3D-StringNets.
	
	\begin{definition}[3D-StringNet] The StringNet $\calG_{sn}= (\calV_{sn},$ $\calE_{sn},\calF_{sn})$ is a graph consisting of: 
		1) the defenders as the vertices, $\calV_{sn}=\{\calD_1,\calD_2,...,\calD_{N_d}\}$;
		2) a set of edges, $\calE_{sn}=\{(\calD_j,\calD_{j'}) \in \calV_{sn}\times \calV_{sn} | \calD_j \overset{s} \longleftrightarrow \calD_{j'} \}$, where $\overset{s} \longleftrightarrow$ denotes an impenetrable and extendable string-barrier between the defenders;
		3) a set of triangular, expandable, net-like barrier faces, $\calF_{sn}=\{(\calD_j,\calD_{j'},\calD_{j''})| \calD_j,\calD_{j'}, \calD_{j''}\in\calV_{sn},\ (\calD_j,\calD_{j'})\in\calE_{sn},\ (\calD_j,\calD_{j''})\in\calE_{sn},\ (\calD_{j'},\calD_{j''})\in\calE_{sn}\}$. The union of the set of faces is a single component, orientable triangle mesh with zero genus, i.e., no holes (Fig.~\ref{fig:3D_StringNet_formation}).
	\end{definition}
	
	A 3D-StringNet is called closed-3D-StringNet when the union of the face set is a closed manifold and we denote the underlying graph as $\calG_{sn}^{cl}=(\calV_{sn}^{cl},$ $\calE_{sn}^{cl},\calF_{sn}^{cl})$ otherwise it is called as open-3D-StringNet and the graph is denoted as $\calG_{sn}^{op}=(\calV_{sn}^{op},$ $\calE_{sn}^{op},\calF_{sn}^{op})$. 
	\begin{figure}[!h]
		\abovedisplayskip -2pt
		\centering
		\includegraphics[width=0.28\textwidth,trim={0cm 2.8cm 0cm 0.25cm},clip]{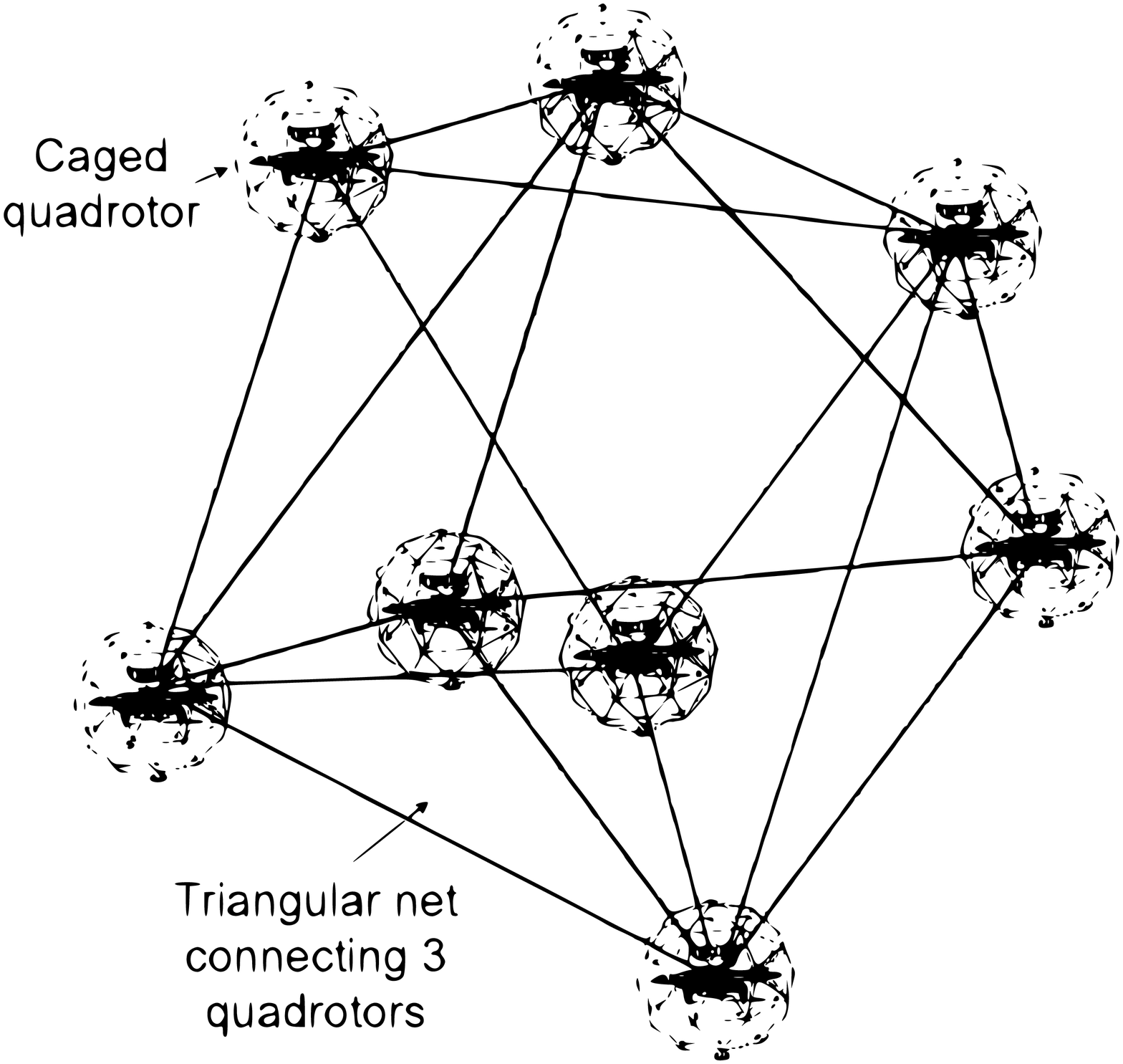}
		\caption{3D StringNet Formation}
		\label{fig:3D_StringNet_formation}
	\end{figure}
	For example, these triangular net-like barriers can look similar to the ones found in \cite{ritz2012cooperative}. We assume that the effect of the triangular net-like barriers on the dynamics of the vehicles is negligible. In practice these triangular net-like barriers can only have finite size. So, we consider the following practical constraints on the edges and the faces in a 3D-StringNet.
	\begin{condition}[Practical Constraint on 3D-StringNet] \label{cond:3D-StringNet}A 3D-StringNet $\calG_{sn}$ should satisfy: $\forall (\calD_j,\calD_k)\in\calE_{sn},\ R_{jk}=||\mathbf{r}_{dj}-\mathbf{r}_{dk}||<\bar{R}_{sb}$, where $\bar{R}_{sb}$ is the maximum length any edge in $\calE_{sn}$ can have.
	\end{condition}
	Condition \ref{cond:3D-StringNet} implies that $\forall (\calD_j,\calD_k,\calD_l)\in\calF_{sn}, \ A_{jkl}^s\leq \frac{\sqrt{3}}{4}(\bar{R}_{sb})^2$, where $A_{jkl}^s$ represents the area of triangular barrier face that is formed by defenders $\calD_j,\;\calD_k,$ and $\calD_l$. 	
	In the next two subsections, we design three 3D formations for the 3D-StringNet that satisfy Condition \ref{cond:3D-StringNet} with the minimum number of defenders required to herd a given a swarm of attackers. 
	\begin{figure}[!h]
		\belowdisplayskip 0pt
		\includegraphics[width=0.48\textwidth,trim={0cm 0cm 0cm 0cm},clip]{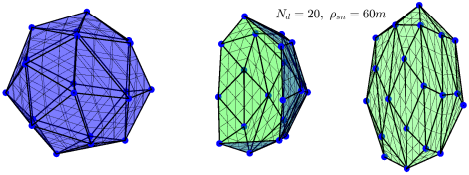}
		\caption{Spherical, hemispherical, and planar formation}
		\vspace{-4mm}
		\label{fig:three_formations}
	\end{figure}
	
	\subsection{Optimal 3D formation for 3D-StringNet Herding}
	
	We want to design a closed 3D-StringNet formation that encloses the connectivity region of the attackers. Since a triangular mesh generated by connecting uniformly distributed points on a sphere contains the largest spatial volume with a given number of points, we choose the locations of the defenders by uniformly distributing them on a sphere. 
	
	The uniform distribution of the defenders on a spherical surface is generated as a solution to the problem of finding the minimum electrostatic potential energy configuration of N electrons constrained on the surface of the unit sphere \cite{koay2014distributing} (Thompson problem). Let $\mathbf{p}_i=[\theta_i,\phi_i]^T$ denote the spherical coordinates of $i^{th}$-electron on the sphere of radius $\rho_{sn}$. The electrostatic potential energy $\Phi_C$ of $N_d$ electrons is expressed as:
	\be \label{eq:eletrostatic_sum} 
	\Phi_C= \textstyle \sum_{i=1}^{N_d}
	\sum_{j\neq i}^{N_d}\frac{1}{\rho_{sn}\sqrt{2(1-\Lambda(\phi_{ij},\theta_i,\theta_j))}},
	\ee
	where $\Lambda(\phi_{ij},\theta_i,\theta_j)=C(\Delta\phi_{ij})S(\theta_i)S(\theta_j)+C(\theta_i)C(\theta_j)$, $S(\theta)=\sin(\theta)$, $C(\theta)=\cos(\theta)$, and $\Delta\phi_{ij}=\phi_i-\phi_j$. $\rho_{sn}$ is the radius of the sphere on which the defenders are distributed. Denote $\mathbf{p}=[\mathbf{p}_1,\mathbf{p}_2,...,\mathbf{p}_{N_d}]^T$. Then, the problem of finding an uniform distribution of electrons is formulated as an unconstrained optimization problem:
	\be \label{eq:min_prob}
	\mathbf{p}^*=\textstyle \argmin_{\mathbf{p}} \quad \Phi_C
	\ee 
	We use gradient flow to find $\mathbf{p}^*$. Starting with some initial locations, the motion of the electrons under gradient flow is governed by:
	$
	\dot{\mathbf{p}}=-\nabla \Phi_C.
	$
	We choose the optimal locations of the electrons in the uniform distribution from \eqref{eq:min_prob} as the desired locations $\bm{\xi}_{l}^{s_0}=\rho_{sn}[\sin(\theta_l^*)\cos(\phi_l^*), \; \sin(\theta_l^*)\sin(\phi_l^*), \; \cos(\theta_l^*)]^T\in \bR^3$, for $l\in I_d$, for the defenders to obtain a closed-3D-StringNet $\calG_{sn}^{cl}$. Let $\scriptF_{sp}^{rel}(\rho_{sn},N_d)$ denote the formation of $N_d$ defenders uniformly distributed on the sphere of radius $\rho_{sn}$ centered at the origin and characterized by $\bm{\xi}_{l}^{sp_0}$, for all $l\in I_d$ (see for example $\scriptF_{sp}^{rel}(60,20)$ shown in Fig.~\ref{fig:three_formations}).
	
	We choose $\rho_{sn}$ such that even if all the triangular net-like barriers have sides with length $\bar{R}_{sb}$, the volume enclosed by the formation $\scriptF_{sp}^{rel}(\rho_{sn},N_d)$ contains a sphere of radius $\rho_{ac}$. This requires $\rho_{sn}\ge \sqrt{\rho_{ac}^2+\frac{(\bar{R}_{sb})^2}{3}}$. Additionally, we require $\rho_{sn}\ge \rho_{ac}+b_d$ where $b_d$ is the tracking error \cite{chipade2020swarmherding}. Due to practical limit of $\bar{R}_{sb}$ on the edge length, to obtain a formation with minimal number of defenders, $\rho_{sn}$ should be equal to its minimal value so we choose $\rho_{sn}=\underline{\rho}_{sn}=\max\{\sqrt{\rho_{ac}^2+\frac{(\bar{R}_{sb})^2}{3}}, \rho_{ac}+b_d\}$. 
	
	Given the radius of formation $\rho_{sn}=\underline{\rho}_{sn}$, we want to find the minimum number of defenders on the formation $\scriptF_{sp}^{rel}(\rho_{sn},N_d)$ that satisfy the practical constraints on the maximum edge length on the underlying closed-3D-StringNet (Condition \ref{cond:3D-StringNet}). This requires maximum edge length $R_{sb}^{max}=\max_{(j, k)\in \calE_{sn}^{cl}} \norm{\bm{\xi}_{j}^{s_0}-\bm{\xi}_{k}^{s_0}}$ on $\scriptF_{sp}^{rel}(\rho_{sn},N_d)$ be smaller than $\bar{R}_{sb}$.
	In Fig.~\ref{fig:f_N}, the black curve shows the values of $R_{sb}^{max}$ for different values of $N_d$ by numerically evaluating uniform formations $\scriptF_{sp}^{rel}(\rho_{sn},N_d)$ for the given values of $N_d$. As observed, finding an explicit function that maps $N_d$ to $R_{sb}^{max}$ on $\scriptF_{sp}^{rel}(\rho_{sn},N_d)$ is extremely difficult. The reason is that unlike circular formation, the symmetry is relatively rare in three-dimensional spherical formation. To remedy this, we compute the minimum $N_d$ by numerically enumerating the formations by using the steps Algorithm \ref{alg:min_number_defenders}. 
	\vspace{-3.5mm}
	\begin{algorithm}
		\lstset{numbers=left, numberstyle=\tiny, stepnumber=1, numbersep=5pt}
		\caption{Minimum number of defenders $N_d$}
		\label{alg:min_number_defenders}
		Initialize $N_d=N_{d0}$\\
		Find the distribution $\scriptF_{sp}^{rel}(\rho_{sn},N_{d})$ and $R_{sb}^{max}$\\
		\If{$R_{sb}^{max}$ does not satisfy Condition \ref{cond:3D-StringNet}}{Set $N_{d} =N_{d}+1$ and repeat step 2 to 3}
		\Return{$N_d$}
	\end{algorithm}
	\vspace{-4mm}
	Given the uncertain dependence of maximum edge length on $N_d$, one may be tempted to use minimum choice of $N_{d0}=4$ as an initial guess. However, this may require longer time to determine the best $N_d$ for larger $\rho_{sn}$. In Fig.~\ref{fig:f_N}, the red curve shows the average edge length $R_{sb}^{av}$ on $\scriptF_{sp}^{rel}$. We notice that the average length of edges $R_{sb}^{av}$ can be well fitted by a function $f_N(N_d)$:
	\be \label{f_n}
	\abovedisplayskip=3pt
	\belowdisplayskip=3pt
	\baa
	f_N(N_d)=\sqrt{\frac{2(1-2\cos(\frac{\pi N_d}{3N_d-6}))}{(1-\cos(\frac{\pi N_d}{3N_d-6}))}},
	\eaa
	\ee
	shown as the blue curve in Fig.~\ref{fig:f_N}. We have that the maximum length $R_{sb}^{max}$ satisfies: $f_N(N_d)<R_{rel}={R_{sb}^{max}}{\rho_{sn}}$. So we can safely choose
	$N_{d0}=f_N^{-1}(\frac{\bar{R}_{sb}}{\rho_{sn}})$ as the initial guess to the iterative scheme mentioned earlier to find minimum $N_d$ satisfying Condition \ref{cond:3D-StringNet}. 
	By doing so, we start closer to the desired minimum value of $N_d$ and the computational time to find this $N_d$ can be greatly reduced, as shown in Fig.~\ref{fig:N_m}, where $\Delta N$ represents number iterations required to find minimum $N_d$. 
	\begin{figure}[!h]
		\centering
		\includegraphics[width=0.5\textwidth,trim={1.3cm .1cm 1.4cm .75cm},clip]{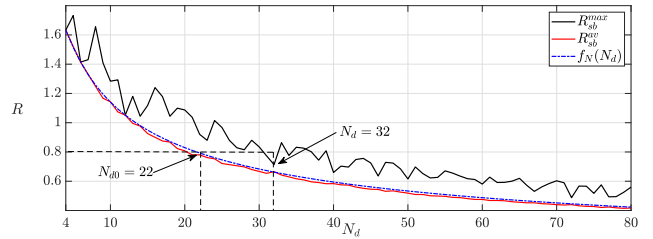}
		\caption{Relative edge lengths in the spherical formation}
		\vspace{-4mm}
		\label{fig:f_N}
	\end{figure}
	\begin{figure}[!h]
		\belowdisplayskip -5pt
		\centering
		\includegraphics[width=0.5\textwidth]{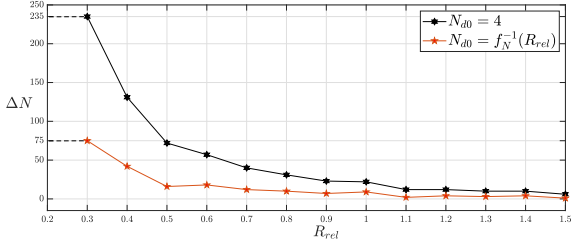}
		\caption{Number of iterations comparison}
		\vspace{-5mm}
		\label{fig:N_m}
	\end{figure}
	
	In practice, since the number of defenders is finite, the data of $R_{rel}=\frac{R_{sb}^{max}}{\rho_{sn}}$ for different values of $N_d$ can be pre-calculated. Then the problem of finding the minimum $N_d$ simply reduces to a linear search over the stored information which can be significantly faster compared to Algorithm~\ref{alg:min_number_defenders}. 
	
	\subsection{Intermediate 3D-StringNet Formations}
	Following the similar idea as in our 2D herding approach \cite{chipade2020swarmherding}, the defenders enclose the attackers via the closed-3D-StringNet, which is realized through a sequence of intermediate 3D-StringNet formations. We design two open-3D-StringNet formations for this purpose: 1) open-3D-StringNet $\calG_{sn}^{op}$ with hemispherical formation $\scriptF_{hs}^{rel}$, and 2) open-3D-StringNet $\calG_{sn}^{op}$ with planar formation $\scriptF_{pl}^{rel}$. 
	These formations are obtained by transforming the uniform spherical formation $\scriptF_{sp}^{rel}$ by using mappings that satisfy the Condition \ref{cond:3D-StringNet}. These mappings are discussed in the following subsections.
	
	
	\subsubsection{Mapping between hemispherical and spherical formation}
	Let $\mathbf{r}_{l}^{sp}=[\rho_{sn},\theta_l^{sp}, \phi_l^{sp}]^T=[\rho_{sn},\mathbf{p}_{l}^*]^T \in \bS_s\triangleq [0,\infty)\times[0,\pi]\times[-\pi,\pi]$ denote the $l^{th}$ desired position in $\scriptF_{sp}^{rel}$ in the spherical coordinates and $\mathbf{r}_{l}^{hs}=[\rho_{sn},\theta_l^{hs}, \phi_l^{hs}]^T \in \bS_h\triangleq [0,\infty)\times[0,\pi]\times[-\frac{\pi}{2},\frac{\pi}{2}]$ denote the $l^{th}$ desired position in $\scriptF_{hs}^{rel}$ in the spherical coordinates.
	We consider the mapping $\mathbf{m}_{sp}^{hs}:\bS_{sp}\rightarrow\bS_{hs}$ given by:
	\be \label{mapping1}
	\abovedisplayskip=3pt
	\belowdisplayskip=3pt
	\mathbf{r}_{l}^{hs}=\mathbf{m}_{sp}^{hs}(\mathbf{r}_{l}^{sp})=\left[\rho_{sn}, \; \theta_l^{sp}, \; 0.5\phi_l^{sp}\right]^T.
	\ee
	By mapping $\mathbf{m}_{sp}^{hs}$, the spherical formation is cut by the half plane $\phi=\pm\pi$ and then two sides of the cut rotate towards the plane $\phi=\pm\frac{\pi}{2}$ yielding a hemispherical shell like formation (Fig.~\ref{fig:three_formations}).
	
	We claim that all the edges in $\calG_{sn}^{op}$ on the hemispherical formation $\scriptF_{hs}^{rel}$ obtained through the mapping $\mathbf{m}_{sp}^{hs}$ satisfy the Condition \ref{cond:3D-StringNet}. To see why, consider the length of the edge $(\mathbf{r}_i^{hs},\mathbf{r}_j^{hs}) \in \calE_{sn}^{op}$: $
	L_{ij}^{hs}=\rho_{sn}\sqrt{2-2\Lambda(\Delta\phi_{ij}^{hs},\theta_i^{hs},\theta_j^{hs})}
	$. Similarly, the length of the edge $(\mathbf{r}_i^{sp},\mathbf{r}_j^{sp}) \in \calE_{sn}^{cl}$: $
	L_{ij}^{sp}=\rho_{sn}\sqrt{2-2\Lambda(\Delta\phi_{ij}^{sp},\theta_i^{sp},\theta_j^{sp})}
	$.  
	The only difference between $L_{ij}^{sp}$ and $L_{ij}^{hs}$ is that $\Delta\phi_{ij}^{hs}=\frac{1}{2}\Delta\phi_{ij}^{sp}$ and it is easy to see that $L_{ij}^{hs}\leq L_{ij}^{sp}$. These desired positions $\mathbf{r}_{l}^{hs}$ are represented in Cartesian coordinates by $\bm{\xi}_{l}^{hs_0} =\rho_{sn}[\sin(\theta_l^{hs})\cos(\phi_l^{hs}), \; \sin(\theta_l^{hs})\sin(\phi_l^{hs}), \; \cos(\theta_l^{hs})]^T\in \bR^3$, for all $l \in I_d$.

	\subsubsection{Mapping between planar and hemispherical formation}
	
	For a given constraint on the edge length, a planar formation will create a larger blockage in the path of the attackers as compared to the hemispherical one. Therefore, an open-3D-StringNet $\calG_{sn}^{op}$ with planar formation $\scriptF_{pl}^{rel}$ is chosen as the desired formation to be achieved at the end of the gathering phase.
	
	The planar formation $\scriptF_{pl}^{rel}$ is obtained from $\scriptF_{hs}^{rel}$. To ease out the mathematics, $\scriptF_{hs}^{rel}$ is first rotated about the cartesian $y$-axis by $90^\circ$ to obtain a rotated formation $\scriptF_{hs'}^{rel}$ (Fig.~\ref{fig:three_formations}). Let   
	$\mathbf{r}_{l}^{hs'} =[\rho_{sn},\theta_l^{hs'},\phi_l^{hs'}]^T \in \bS_{hs'}=[0,\infty)\times[0,\frac{\pi}{2}]\times[0,2\pi)$ be the position corresponding to $\mathbf{r}_l^h$ after the aforementioned rotation. Let $\mathbf{r}_{l}^{pl}=[\rho_l^{pl},\phi_l^{pl}]^T \in \bC_{pl}\triangleq [0,\infty)\times[0,2\pi)$ be the $l^{th}$ desired position in the planar formation $\scriptF_{pl}^{rel}$. We consider a mapping $\mathbf{m}_{hs'}^{pl}:\bS_{hs'}\rightarrow\bC_{pl}$ given by
	\be \label{mapping2}
	\abovedisplayskip=3pt
	\mathbf{r}_{l}^{pl}=\mathbf{m}_{hs'}^{pl}(\mathbf{r}_{l}^{hs'})=[k_{pl}\rho_{sn}\sin(\theta_j^{hs'}), \; \phi_j^{hs'}]^T,
	\ee
	where $k_{pl}$ is a constant scaling factor.
	The lengths of the edges in $\calG_{sn}^{op}$ corresponding to the formations $\scriptF_{hs'}^{rel}$ and $\scriptF_{pl}^{rel}$ denoted as $L_{ij}^{hs'}$ and $L_{ij}^{pl}$, respectively, are given by
	\be \label{len_p}
	\arraycolsep=1pt
	\abovedisplayskip=0pt
	\baa
	L_{ij}^{hs'}&=\rho_{sn}\sqrt{2-2\Lambda(\Delta\phi_{ij}^{hs'},\theta_i^{hs'},\theta_j^{hs'})},\\
	L_{ij}^{pl}&=\sqrt{(\rho_i^{pl})^2+(\rho_j^{pl})^2-2(\rho_i^{pl})(\rho_j^{pl})C(\Delta\phi_{ij})}\\
	&\leq k_{pl}\rho_{sn}\sqrt{2-2\Lambda(\Delta\phi_{ij}^{hs'},\theta_i^{hs'},\theta_j^{hs'})}=k_{pl}L_{ij}^{hs'}.
	\eaa
	\ee
	We have the following result.
	
	\begin{lemma}\label{lem:k_pl}
		If $0<k_{pl}\le \frac{\bar{R}_{sb}}{R_{hs}^{max}}$, then $\calG_{sn}^{op}$ with planar formation $\scriptF_{pl}^{rel}$ satisfies Condition~\ref{cond:3D-StringNet}, where $R_{hs}^{max}=\max_{(j, k)\in \calE_{sn}^{op}} \norm{\mathbf{r}_{j}^{hs'}-\mathbf{r}_{k}^{hs'}}$ is the length of the longest edge on the hemispherical formation. Furthermore, we have $\frac{\bar{R}_{sb}}{R_{hs}^{max}}=\frac{\bar{R}_{sb}}{R_{sb}^{max}}\frac{R_{sb}^{max}}{R_{hs}^{max}}>1$.
	\end{lemma}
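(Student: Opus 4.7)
The plan is to prove the two assertions separately, both as short chains of inequalities built entirely from facts already collected in the excerpt.

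For the first assertion I would fix an arbitrary edge $(\calD_j,\calD_k)\in\calE_{sn}^{op}$ on the planar formation, invoke the last line of \eqref{len_p} to obtain $L_{jk}^{pl}\le k_{pl}L_{jk}^{hs'}$, bound $L_{jk}^{hs'}\le R_{hs}^{max}$ by the definition of the maximum, and conclude $L_{jk}^{pl}\le k_{pl}R_{hs}^{max}\le \bar{R}_{sb}$ by the hypothesis on $k_{pl}$. Since the edge was arbitrary this is exactly Condition~\ref{cond:3D-StringNet} for $\calG_{sn}^{op}$ equipped with $\scriptF_{pl}^{rel}$.

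For the second assertion the factorisation $\tfrac{\bar R_{sb}}{R_{hs}^{max}}=\tfrac{\bar R_{sb}}{R_{sb}^{max}}\cdot\tfrac{R_{sb}^{max}}{R_{hs}^{max}}$ is an algebraic identity, so the real content is that one factor is strictly $>1$ and the other is $\ge 1$. The first factor is strictly greater than $1$ because Algorithm~\ref{alg:min_number_defenders} returns the smallest $N_d$ for which the closed spherical formation satisfies Condition~\ref{cond:3D-StringNet} strictly, i.e.\ $R_{sb}^{max}<\bar R_{sb}$. For the second factor I would note that the $90^\circ$ rotation sending $\scriptF_{hs}^{rel}$ to $\scriptF_{hs'}^{rel}$ is a rigid motion so edge lengths are preserved, and then reuse the edgewise inequality $L_{ij}^{hs}\le L_{ij}^{sp}$ that the excerpt already established alongside $\mathbf{m}_{sp}^{hs}$: because $\mathbf{m}_{sp}^{hs}$ fixes the $\theta$-coordinates and halves $\Delta\phi$, and because $\Lambda$ is affine and nondecreasing in $\cos(\Delta\phi)$ on the relevant range (the range $\theta\in[0,\pi]$ forces $\sin\theta_i\sin\theta_j\ge 0$), we get $\Lambda^{hs}\ge\Lambda^{sp}$, hence $L^{hs}\le L^{sp}$. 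Taking the maximum over the common edge set yields $R_{hs}^{max}\le R_{sb}^{max}$, so the second factor is $\ge 1$ and the product is strictly $>1$.

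The main subtlety is the bookkeeping that the two maxima $R_{sb}^{max}$ and $R_{hs}^{max}$ are taken over the same underlying set of vertex pairs: the open StringNet $\calG_{sn}^{op}$ is obtained from the closed one by discarding some edges, but the mapping $\mathbf{m}_{sp}^{hs}$ (and the subsequent rigid rotation to $\scriptF_{hs'}^{rel}$) acts only on the vertex coordinates, not on the adjacency relation, so corresponding edges are compared in one-to-one fashion. Once this identification is made explicit, the remainder is direct manipulation of inequalities already in hand.
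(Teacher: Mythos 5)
Your argument is correct and is essentially the paper's own (the lemma is left without an explicit proof there): the chain you assemble---the edgewise bound $L_{ij}^{pl}\le k_{pl}L_{ij}^{hs'}$ from \eqref{len_p}, the comparison $L_{ij}^{hs}\le L_{ij}^{sp}$ coming from halving $\Delta\phi_{ij}$, length preservation under the rigid rotation to $\scriptF_{hs'}^{rel}$, and the fact that Algorithm~\ref{alg:min_number_defenders} returns a spherical formation with $R_{sb}^{max}<\bar{R}_{sb}$---is exactly the inline material the paper relies on, and your bookkeeping remark that $\calE_{sn}^{op}$ is a subset of $\calE_{sn}^{cl}$ under the same vertex identification is the right way to justify $R_{hs}^{max}\le R_{sb}^{max}$. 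The only wrinkle, inherited from the lemma statement rather than from your proof, is that when $k_{pl}=\frac{\bar{R}_{sb}}{R_{hs}^{max}}$ the chain only gives $L_{jk}^{pl}\le \bar{R}_{sb}$ while Condition~\ref{cond:3D-StringNet} is written with strict inequality; this looseness is present in the paper itself and does not reflect a gap in your reasoning.
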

	Lemma~\ref{lem:k_pl} implies that, by choosing $k_{pl}>1$, the mapping $\mathbf{m}_{hs'}^{pl}$ is able to generate a circular planar formation $\scriptF_{pl}^{rel}$ with radius $\rho_{sn,pl}>\rho_{sn}$ that satisfies Condition \ref{cond:3D-StringNet}. These desired positions $\mathbf{r}_{l}^{pl}$ are represented in Cartesian coordinate system by $\bm{\xi}_{l}^{pl_0}=\rho_l^{pl}[\cos(\phi_l^{pl}),\sin(\phi_l^{pl}),0]^T \in \bR^3$, for all $l \in I_d$. We call the local body-fixed $z$-axis as the orientation vector of the formation $\scriptF_{pl}^{rel}$.

	\section{Modifications to 2D StringNet Herding}\label{sec:herding}
	The defenders follow the same overall structure of the 2D-StringNet herding \cite{chipade2020swarmherding}, while utilizing the 3D-StringNet formations generated in the previous section and with appropriate modifications to the corresponding parts from the 2D approach. Thus, the 3D StringNet herding consists of four phases \cite{chipade2020swarmherding}: 1) Gathering and forming a planar formation. 2) Seeking the attackers while maintaining the planar formation. 3) Enclosing the attackers by forming a spherical formation around them. 4) Herding the enclosed attackers to $\calS$. These phases are discussed in the following subsections.

	\subsubsection{Gathering}
	In the gathering phase, the defenders first converge to the planar formation $\scriptF_{pl}^g$ centered at the gathering center $\mathbf{r}_{df^g}$ on the expected path of the attackers (shortest path to the protected area). Let us define a mathematical object $\calT \calR$ to define formations obtained by translating and rotating a given formation $\scriptF$. We obtain $\scriptF_{pl}^g$ by translating the formation $\scriptF_{pl}^{rel}$ to $\mathbf{r}_{df^g}$ and rotating by $\calR(\mathbf{q}_{ac})$, where $\calR(\mathbf{q}_{ac})$ is the rotation matrix corresponding to the orientation represented by the quaternion $\mathbf{q}_{ac}$, where $\mathbf{q}_{ac}$ denotes the orientation when body $z$-axis points toward the attackers' center $\mathbf{r}_{ac}$. We denote this transformation by $\scriptF_{pl}^g = \calT \calR (\mathbf{r}_{df^g},\mathbf{q}_{ac}) \scriptF_{pl}^{rel}$. In particular, the formation $\scriptF_{pl}^g$, with underlying graph $\calG_{sn}^{op}$, is characterized by positions $\bm{\xi}_{\text{a}(j)}^g=\mathbf{r}_{df^g}+\calR(\mathbf{q}_{ac}) \bm{\xi}_{\text{a}(j)}^{p_0}$ for all $j \in I_d$. The gathering center $\mathbf{r}_{df^g}$ of the gathering formation $\scriptF_{pl}^g$ is obtained by solving a mixed integer quadratic program (MIQP) iteratively \cite{chipade2020swarmherding}. The defender $\calD_j$ converges to its assigned desired (goal) position $\bm{\xi}_{\text{a}(j)}^g$ on $\scriptF_{pl}^g$, where $\text{a}:I_d \rightarrow I_d$ is the defender-goal assignment obtained from the MIQP \cite{chipade2020swarmherding}. After the defenders arrive at their desired positions, they establish nets with the neighboring defenders as per $\calF_{sn}^{op}$. Then, the defending swarm enters the seeking phase which is discussed next.
	
	\subsubsection{Seeking}
	In practice, the attackers may deviate from their optimal trajectories computed during the gathering phase, which requires defenders to come closer to the attackers in order to enclose them.
	In the seeking phase, we consider the desired formation $\scriptF_{pl}^s=\calT \calR(\mathbf{r}_{df^s},\mathbf{q}_{df^s})\scriptF_{pl}^{rel}$ of the defenders as a virtual rigid body with center of mass at $\mathbf{r}_{df^s}$, where $\mathbf{q}_{df^s}=[q_1,q_2,q_3,q_4]^T=[\tilde{\mathbf{q}}_{df^s}^T,q_4]^T$ is the quaternion that represents the orientation of the formation $\scriptF_{pl}^s$. The virtual body's translational motion is governed by the same dynamics as in \eqref{eq:defendDyn1} and the rotational dynamics are governed by Euler equations and quaternion kinematics:
	\be \label{eq: q_update1}	
	\abovedisplayskip=0pt
	\belowdisplayskip=1pt
	\baa
	\dot{\tilde{\mathbf{q}}}_{df^s}&\hspace{-3mm}=0.5(\bm{\omega}_{df^s} \tilde{\mathbf{q}}_{df^s}+q_4\bm{\omega}_{df^s}),\quad \dot q_4=0.5\bm{\omega}_{df^s}^T\tilde{\mathbf{q}}_{df^s};
	\\
	\dot{\bm{\omega}}_{df^s}&\hspace{-3mm}=\mathbf{u}_{df^s}^{rot}, 
	\eaa
	\ee
	where $\bm{\omega}_{df^s}=[\omega_x,\omega_y,\omega_z]^T$ is the angular velocity of the rigid body resolved in body-fixed frame.
	To ensure that the desired formation gets closer to the attackers and the orientation of the formation faces the attackers, we apply the following translational and rotational feedback accelerations to the virtual rigid body \cite{wie1989quaternion}:
	\bse
	\abovedisplayskip=3pt
	\belowdisplayskip=3pt
	\begin{align}\label{eq:def_des_control1}
		&\mathbf{u}_{df^s}^{trans}=\bm{\Omega}_{\bar{u}_{df^s}^{trans}} \left( - k_1(\mathbf{r}_{df^s}-\mathbf{r}_{ac})\right),\\
		&\mathbf{u}_{df^s}^{rot}=\bm{\Omega}_{\bar{u}_{df^s}^{rot}}  \left (-D\bm{\omega}_{df^s}-K\mathbf{q}_e \right)
	\end{align} 
	\ese
	where $\bar{u}_{df^s}^{trans}$ and $\bar{u}_{df^s}^{rot}$ are saturation limits; $k_1$, $K$ and $D$ are gain matrix which are diagonal matrices with non-negative scalars \cite{wie1989quaternion}. The quaternion $\mathbf{q}_{des}$ represents the desired orientation where the local $z$-axis points toward the center of attackers $\mathbf{r}_{ac}$. $\mathbf{q}_e=Q(\mathbf{q}_{des})\mathbf{q}_{df^s}$ is the attitude error between the current quaternion and $\mathbf{q}_{des}$.  
	The initial quaternion is $\mathbf{q}_{df^s}(0)=\mathbf{q}_{ac}$ and the initial angular velocity is $\bm{\omega}=[0,0,0]^T$.
	
	The desired position $\bm{\xi}_{l}^s=\mathbf{r}_{df^s} +\calR(\mathbf{q}_{df^s}) \bm{\xi}_{l}^{pl_0}$, for $l \in I_d$, on the desired formation $\scriptF_{pl}^s$ satisfies:
	\bse \label{eq:desired_formation_seeking}
	\abovedisplayskip=1pt
	\belowdisplayskip=2pt
	\arraycolsep=1.4pt
	\begin{align*}
		\dot{\bm{\xi}}_{l}^s=&\bm{\eta}_{l}^s=\dot{\mathbf{r}}_{df^s}+\bm{\omega}_{df^s}\times\bm{\xi}_{l}^{pl_0},\\ 
		\dot{\bm{\eta}}_{l}^s=&\mathbf{u}_{df^s}^{trans}-C_d \norm{\mathbf{v}_{df^s}}\mathbf{v}_{df^s}+\dot{\bm{\omega}}_{df^s}\times\bm{\xi}_{l}^{pl_0}\\
		&+\bm{\omega}_{df^s}\times(\bm{\omega}_{df^s}\times\bm{\xi}_{l}^{pl_0}).
	\end{align*}
	\ese
	The defenders $\calD_j$ track their assigned desired position $\bm{\xi}_{\text{a}(j)}^s$ using the 3D extension of the 2D finite-time convergent controllers as in \cite{chipade2020swarmherding}. Seeking is completed when $\norm{\mathbf{r}_{df^s}-\mathbf{r}_{ac}}<\epsilon_1$ and $\mathbf{q}_e<\epsilon_2$, where $\epsilon_1>0$ and $\epsilon_2>0$ are user defined small thresholds.
	
	\subsubsection{Enclosing}
	After the defenders come close to the attackers as an open-3D-StringNet with $\scriptF_{pl}^s$ at the end of seeking, the enclosing phase is initiated. In the enclosing phase, defenders aim to enclose the attackers in the closed-3D-StringNet with formation $\scriptF_{sp}^e=\calT \calR (\mathbf{r}_{ac}, \mathbf{q}_{df^e})\scriptF_{sp}^{rel}$, where $\mathbf{q}_{df^e}$ is the quaternion at the end of the seeking phase. Starting from the planar formation $\mathcal{F}_p^s$, the defenders first achieve an open-StringNet with hemispherical formation $\scriptF_{hs}^e=\calT \calR (\mathbf{r}_{ac}, \mathbf{q}_{df^e})\scriptF_{hs}^{rel}$, and then the closed-3D-StringNet with formation $\scriptF_{sp}^e$. The reason to choose an intermediate open-3D-StringNet formation $\scriptF_{hs}^e$ is to avoid that the defenders unnecessarily come close to each other while converging to $\scriptF_{sp}^e$ allowing the attackers to disperse. 
	The control actions for the defenders to track their desired positions on the respective formations during this phase can be obtained from \cite{chipade2020swarmherding}. The desired formation $\scriptF_{hs}^e$ is switched to $\scriptF_{sp}^e$ when the defenders come within a distance of $b_d$ from their desired positions on $\scriptF_{hs}^e$. The closed-3D-StringNet is achieved when all defenders converge to their desired locations, i.e., $\norm{\mathbf{r}_{dj}-\bm{\xi}_{\text{a}(j)}^e}<b_d$ for all $j \in I_d$, where $b_d$ is the tracking error incurred due to the unknown but bounded acceleration terms $\ddot{\bm{\xi}}_{\text{a}(j)}$ \cite{chipade2020swarmherding}.
	
	\subsubsection{Herding}
	Once the defenders form the closed-3D-StringNet around the attackers, they move towards the safe area while tracking a rigid spherical formation $\mathscr{F}_{sp}^h=\calT \calR (\mathbf{r}_{df^h},\mathbf{q}_{df^h})\scriptF_{sp}^{rel}$ centered at a virtual agent $\mathbf{r}_{df^h}$, where $\mathbf{q}_{df^h}$ is equal to $\mathbf{q}_{df^e}$ at the start of the herding phase. The virtual agent moves towards the safe area $\calS$ as discussed in \cite{chipade2020swarmherding} and the defenders use the finite-time, bounded tracking controllers similar to that in \cite{chipade2020swarmherding} to track their desired positions on $\scriptF_{sp}^h$. The herding phase ends when every enclosed attacker is successfully herded into the safe area.

	\section{Dominance Region for the Defenders} \label{sec:dominance_region}
	
	The success of the defenders depends on whether they are able to achieve the open-3D-StringNet with planar formation $\scriptF_{pl}^g$ in the expected path of the attackers, well before the attackers reach the gathering center. For given initial conditions of all the agents, the defenders require to solve the problem of finding the best gathering center $\mathbf{r}_{df^g}$ and the corresponding defender-goal assignment $\text{a}$ using the iterative MIQP formulation \cite{chipade2020swarmherding}, which becomes computationally demanding as the number of agents becomes larger. In this section, we characterize the conditions on the initial positions of the attackers for which the defenders are able to achieve the formation $\scriptF_{pl}^g(\mathbf{r}_{df^g},\mathbf{q}_{ac})$ at a location $\mathbf{r}_{df^g}$ on the shortest path of the attackers to the protected area, before the attackers can reach there. We call this set of initial conditions of the attackers as the dominance region for the given initial positions of the defenders. Let $T_a(\mathbf{r}_{a},\mathbf{r},\rho_{a})$ be the minimum time required by an attacker at $\mathbf{r}_{a}$ to reach within $\rho_{a}$ distance from the point $\mathbf{r}$. Let $\mathbf{R}_d=[\mathbf{r}_{d1},\mathbf{r}_{d2},...,\mathbf{r}_{dN_d}]$ denote the positions of the defenders $\calD_j$ for all $j \in I_d$. Let $T_d(\mathbf{R}_{d},\scriptF_{pl}^g(\mathbf{r},\mathbf{q}))$ be the maximum time required by all the defenders to achieve the gathering formation $\scriptF_{pl}^g(\mathbf{r},\mathbf{q}))$ centered at $\mathbf{r}$. The dominance region is then formally defined as:
	\begin{definition}[Defenders' Dominance Region]
		$Dom(\mathbf{R}_{d},\bar{\rho}_{ac},\Delta T_d^g)=\{\mathbf{r}\in \bR^3| \exists \upsilon \in (\frac{\rho_p}{\norm{\mathbf{r}}},1-\frac{\bar{\rho}_{ac}}{\norm{\mathbf{r}}})$ such that $T_a(\mathbf{r},\mathbf{r}_{df^g},\bar{\rho}_{ac})-T_d(\mathbf{R}_{d},\scriptF_{pl}^g(\mathbf{r}_{df^g},\mathbf{q}_{ac}))))\ge \Delta T_d^g $ $\text{where }\mathbf{r}_{df^g}= \upsilon \mathbf{r} \}$, where $\Delta T_d^g$ is a user-defined time to account for the size of the attackers' swarm and the time required by the defenders to get connected by triangular net-like barriers once arrived at the desired formation.
	\end{definition}
	
	We provide the following formulation that is based on approximation functions, and is computationally less intensive, to find an estimate $Dom_{est}$ of the dominance region $Dom$ that is completely contained inside $Dom$.
	
	Consider $N_d$ defenders and $N_a$ attackers located at given positions as shown in Fig.\ref{fig:abstractionForDominanceRegion}. Let the largest radius of the attackers' formation be $\bar{\rho}_{ac}$. Consider the protected area located at the origin ($\mathbf{r}_p=[0,0,0]^T$). Let the center of mass of the attackers have spherical coordinates ($R_{ac},\phi_{ac}, \theta_{ac}$). Consider the gathering center $\mathbf{r}_{df^g}$ at $(R,\phi_{ac}, \theta_{ac})$. The distance of the defender $\calD_j$ from the center of the gathering formation (Fig.~\ref{fig:abstractionForDominanceRegion}) is:
	\be
	\varrho_j=\sqrt{R^2+R_j^2-2RR_j \Lambda(\phi_{ac}-\phi_{dj}, \theta_{ac},\theta_{dj})},
	\ee
	where $(R_j, \phi_{dj},\theta_{dj})$ are spherical coordinates of the defender $\calD_j$'s position for all $j \in I_d$.
	We have the following proposition using the approximation of maximum function as in \cite{stipanovic2012monotone}.
	\begin{proposition}
		The maximum value among $\varrho_j$, $j \in I_d$, satisfies: $\bar{\varrho}=\ds \max_{j \in I_d} \varrho_j \le \tilde{\varrho}_\delta$ $= \sqrt[\delta]{\sum_{j \in I_d} \varrho_j^{\delta}}$ and $\ds \lim_{\delta \rightarrow \infty} \tilde{\varrho}_\delta=\bar{\varrho}$. 
	\end{proposition}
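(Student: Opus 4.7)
The plan is to establish both the inequality $\bar{\varrho}\le \tilde{\varrho}_\delta$ and the limit via a standard sandwich argument on $p$-norms of the nonnegative vector $(\varrho_1,\dots,\varrho_{N_d})$. Note first that each $\varrho_j\ge 0$ since it is a Euclidean distance, so all quantities raised to the power $\delta$ are well defined for $\delta>0$.

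For the inequality, I would let $j^\star=\argmax_{j\in I_d}\varrho_j$, so that $\bar{\varrho}=\varrho_{j^\star}$. Since every term in the sum $\sum_{j\in I_d}\varrho_j^\delta$ is nonnegative, I can drop all but one term to obtain $\bar{\varrho}^\delta=\varrho_{j^\star}^\delta \le \sum_{j\in I_d}\varrho_j^\delta$. Taking the (positive) $\delta$-th root preserves the inequality and yields $\bar{\varrho}\le \tilde{\varrho}_\delta$, which is the first claim.

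For the limit, I would complement this with an upper bound: since $\varrho_j\le \bar{\varrho}$ for every $j$, we have $\sum_{j\in I_d}\varrho_j^\delta \le N_d\,\bar{\varrho}^\delta$, and taking $\delta$-th roots gives $\tilde{\varrho}_\delta \le N_d^{1/\delta}\,\bar{\varrho}$. Combining the two bounds,
\begin{equation*}
\bar{\varrho}\le \tilde{\varrho}_\delta \le N_d^{1/\delta}\,\bar{\varrho}.
\end{equation*}
Because $N_d$ is a fixed positive integer, $N_d^{1/\delta}\to 1$ as $\delta\to\infty$, and the squeeze theorem delivers $\lim_{\delta\to\infty}\tilde{\varrho}_\delta=\bar{\varrho}$.

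There is no real obstacle here; the only small subtlety is handling the degenerate case $\bar{\varrho}=0$, in which every $\varrho_j=0$ and the proposition is trivial, so the argument above is carried out under $\bar{\varrho}>0$. This approach mirrors the well-known $\ell_p$-to-$\ell_\infty$ norm convergence and matches the monotone approximation framework cited from \cite{stipanovic2012monotone}, so I expect the write-up to be short and self-contained.
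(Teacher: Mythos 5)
Your proof is correct and follows exactly the standard $\ell_\delta$-norm sandwich argument, $\bar{\varrho}\le\tilde{\varrho}_\delta\le N_d^{1/\delta}\bar{\varrho}$, which is precisely the monotone approximation of the maximum function that the paper invokes by citing \cite{stipanovic2012monotone} rather than proving in-line. Nothing is missing; the degenerate case $\bar{\varrho}=0$ you flag is in fact already covered by the same bounds.
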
 
	\begin{figure}[h]
		\centering
		\setlength{\abovecaptionskip}{2pt plus 3pt minus 2pt}
		\setlength{\belowcaptionskip}{-10pt plus 3pt minus 2pt}
		\includegraphics[width=.9\linewidth,trim={3.1cm 2.5cm 1.1cm 1.45cm},clip]{abstractionForDominanceRegion.eps}
		\caption{Abstraction for estimate of dominance region}
		\label{fig:abstractionForDominanceRegion}
	\end{figure}
	The maximum distance any defender would have to travel in the best defender-goal assignment can be upper bounded by $\bar{\varrho}_d=\tilde{\varrho}_{\delta}+ \rho_{sn,p}$, where $\rho_{sn,p}$ is the radius of the planar gathering formation $\scriptF_{pl}^g$. The maximum time for any defender to reach the gathering location assigned to it as per the best defender-goal assignment under time-optimal control \cite{chipade2020approximate} can be upper bounded by: 
	\be \label{eq:maxDefenderTime}
	\baa
	\bar{T}_d(\bar{\varrho}_d) = \frac{1}{\lambda_0} \big(\tanh^{-1} \big(\frac{v_{sw}}{\bar{v}_d} \big) + \tan^{-1} \big(\frac{v_{sw}}{\bar{v}_d} \big)
	\big),
	\eaa
	\ee
	where $\lambda_0=\sqrt{\bar{u}_dC_D}$, $v_{sw}=\sqrt{\frac{(\lambda-1)\bar{u}_d}{(\lambda+1)C_D}}$, $\lambda=e^{2C_D\bar{\varrho}_d}$. Similarly, the minimum time that the attackers require to reach the gathering location is when the attackers move towards the protected with the maximum possible speed.
	The difference between the time needed by the attackers to reach the gathering center and the time required by the defenders to reach there can be bounded from below by:
	\be
	\baa
	\Delta T =  \frac{R_{ac}-\bar{\rho}_{ac}-R}{\bar{v}_a} - \bar{T}_d(R)
	\eaa
	\ee
	Defenders want $\Delta T\ge \Delta T_d^g$ to be able to gather well before the attackers reach the gathering center. We are interested in the limiting condition when $\Delta T=\Delta T_d^g$, for which we have:
	\be \label{eq:AttackDefendTimeDiff1}
	R_{ac}= f(R) =\bar{\rho}_{ac}+R+\bar{v}_a(\bar{T}_d(R)+\Delta T_d^g).
	\ee
	We want to find the smallest value $\underline{R}_{ac} (> \rho_p)$ of $R_{ac}$ for which $\Delta T =\Delta T_d^g$, i.e.,
	\be \label{prob:find_R}
	\underline{R}_{ac} =\textstyle \min_{R>\rho_p} f(R).
	\ee
	
	%
	\begin{lemma}
		Given that no two defenders are co-located, i.e., $\norm{\mathbf{r}_{dj}-\mathbf{r}_{dj'}}>0$ for all $j \neq j' \in I_d$, $f(R)$ as given in Eq.~\eqref{eq:AttackDefendTimeDiff1} is a locally convex function of $R$.
	\end{lemma}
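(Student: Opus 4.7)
The plan is to establish local convexity of $f$ by combining a composition-of-convex-functions argument for the dependence of $\bar{\varrho}_d$ on $R$ with a direct second-derivative analysis of the closed-form time function $\bar{T}_d$. Because $\bar{\rho}_{ac}+R+\bar{v}_a\Delta T_d^g$ is affine in $R$, the whole question reduces to studying $T(R):=\bar{T}_d(\bar{\varrho}_d(R))$, where we make the $R$-dependence explicit via $\bar{\varrho}_d(R)=\tilde{\varrho}_\delta(R)+\rho_{sn,p}$.

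First, I would establish convexity of $\bar{\varrho}_d(R)$. Each distance $\varrho_j(R)=\|R\hat{n}_{ac}-\mathbf{r}_{dj}\|$, where $\hat{n}_{ac}$ is the unit vector toward the attackers, is the Euclidean norm of an affine function of $R$, hence convex and nonnegative. The non-colocation hypothesis is used to guarantee that the $\varrho_j(R)$ do not simultaneously collapse to a single piecewise-linear function of $R$, so that $\tilde{\varrho}_\delta(R)=(\sum_j\varrho_j(R)^\delta)^{1/\delta}$ is strictly positive and twice continuously differentiable. Since the $\ell^\delta$-norm (for $\delta\ge 1$) is convex and non-decreasing in each nonnegative coordinate, the vector composition rule yields convexity of $\tilde{\varrho}_\delta(R)$ and hence of $\bar{\varrho}_d(R)$; moreover, the non-colocation assumption ensures strict convexity of at least one summand and therefore $\bar{\varrho}_d''(R)>0$ at generic $R$.

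Second, I would analyze $\bar{T}_d$ via its closed form. The identity $v_{sw}/\bar{v}_d=\sqrt{\tanh(C_D x)}$ yields, after a short calculation, $\bar{T}_d'(x)=1/\bigl(\bar{v}_d\sqrt{\tanh(C_D x)}\bigr)>0$, so $\bar{T}_d$ is strictly increasing; differentiating once more shows $\bar{T}_d''(x)<0$, so $\bar{T}_d$ is globally concave. The chain rule then gives
\begin{equation*}
f''(R)=\bar{v}_a\Bigl[\bar{T}_d''(\bar{\varrho}_d(R))\,(\bar{\varrho}_d'(R))^2+\bar{T}_d'(\bar{\varrho}_d(R))\,\bar{\varrho}_d''(R)\Bigr],
\end{equation*}
in which the second bracketed term is strictly positive at generic $R$ while the first is nonpositive.

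The \emph{local} qualifier in the lemma enters precisely here, since the composition of a concave with a convex function is not automatically convex. I would complete the proof by arguing that in a neighborhood of any critical point $R^\star$ of $f$ the positive contribution dominates: the condition $f'(R^\star)=0$ gives the explicit identity $\bar{v}_a\bar{T}_d'(\bar{\varrho}_d(R^\star))\,\bar{\varrho}_d'(R^\star)=-1$, which pins $(\bar{\varrho}_d'(R^\star))^2$ to a computable value, while $\bar{T}_d'\,\bar{\varrho}_d''>0$ at $R^\star$ by the previous two steps; substitution then shows $f''(R^\star)>0$, and continuity extends positivity to an open interval. The principal obstacle will be making this balance quantitative --- controlling $|\bar{T}_d''|(\bar{\varrho}_d')^2$ against $\bar{T}_d'\,\bar{\varrho}_d''$ uniformly over a neighborhood --- which requires combining the closed-form derivative estimates for $\bar{T}_d$ with an explicit lower bound on $\bar{\varrho}_d''$ obtained by differentiating $\tilde{\varrho}_\delta$ twice under the non-colocation hypothesis.
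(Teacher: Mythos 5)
Your reduction to $g(R)=\bar{T}_d(\bar{\varrho}_d(R))$, the convexity of $\bar{\varrho}_d(R)$, the sign analysis of $\bar{T}_d$ (your closed form $\bar{T}_d'(x)=1/(\bar{v}_d\sqrt{\tanh(C_D x)})>0$ with $\bar{T}_d''<0$ is equivalent to the paper's expressions $\frac{C_D}{\lambda_0}\sqrt{\frac{\lambda+1}{\lambda-1}}$ and its derivative), and the chain-rule identity for $f''$ all match the paper's proof. The divergence --- and the gap --- is in where you anchor the local convexity. You anchor at a critical point $R^\star$ of $f$, where $f'(R^\star)=0$ forces $\bar{\varrho}_d'(R^\star)=-1/(\bar{v}_a\bar{T}_d')\neq 0$; consequently the negative term $\bar{T}_d''\,(\bar{\varrho}_d')^2$ does \emph{not} vanish, and the sign of $f''(R^\star)$ is a genuine competition between two terms of opposite sign whose magnitudes depend on the defender configuration. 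You identify this as ``the principal obstacle'' and then assert that ``substitution then shows $f''(R^\star)>0$'' without carrying out the estimate. Nothing in the hypotheses forces the positive term $\bar{T}_d'\,\bar{\varrho}_d''$ to dominate at such a point: $\bar{\varrho}_d''$ can be made arbitrarily small where $\bar{\varrho}_d'$ is bounded away from zero (e.g.\ when $R^\star$ places the gathering center far from the defender cluster, so that $\tilde{\varrho}_\delta$ is nearly affine in $R$). As written, the argument does not close.

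The paper sidesteps this entirely by anchoring at $R^*$ defined by $\bar{\varrho}_d'(R^*)=0$, i.e.\ the minimizer of the convex function $\tilde{\varrho}_\delta$ --- \emph{not} a critical point of $f$. There the concavity term is multiplied by $(\bar{\varrho}_d'(R^*))^2=0$ and vanishes identically, leaving $g''(R^*)=\bar{T}_d'(\bar{\varrho}_d(R^*))\,\bar{\varrho}_d''(R^*)>0$; non-colocation gives $\tilde{\varrho}_\delta(R^*)>0$, hence $\lambda>1$ and $\bar{T}_d'>0$, and the choice $\delta\ge 2$ gives $\bar{\varrho}_d''(R^*)>0$ from the explicit second-derivative formula. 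Continuity of $g''$ then yields a neighborhood of $R^*$ on which $f$ is convex. If you replace your anchor point by this $R^*$, your argument collapses onto the paper's and the quantitative balance you flag as the obstacle disappears; this is also the relevant form of ``local convexity,'' since $R^*$ is precisely the initial guess used for gradient descent on \eqref{prob:find_R}.
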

	\begin{proof}
		Sum of two convex functions is always a convex function \cite{boyd2004convex}, so it is sufficient to show that $\bar{T}_d(R)$ is a locally convex function to show that $f(R)$ is a locally convex function. Let $g(R)=\bar{T}_d(\bar{\varrho}_d(R))$. The double derivative of $g$ is:
		\be\label{eq:partial_g}
		\baa
		\doublepdydx{g}{R} & = \doublepdydx{\bar{T}_d}{\bar{\varrho}_d}\left(\pdydx{\bar{\varrho}_d}{R}\right)^2 + \pdydx{\bar{T}_d}{\bar{\varrho}_d}\doublepdydx{\bar{\varrho}_d}{R}
		.
		\eaa
		\ee
		We have 
		\bse 
		\begin{align}
			\textstyle 
			\pdydx{\bar{T}_d}{\bar{\varrho}_d}=& \textstyle \frac{C_d}{\lambda_0}\sqrt{\frac{\lambda+1}{\lambda-1}} \ge 0;\label{eq:partial_Tbar}\\
			\textstyle
			\doublepdydx{\bar{T}_d}{\bar{\varrho}_d} = & \textstyle 
			\frac{1}{\lambda_0} \left( \frac{(2C_d)^2\lambda}{1-\lambda^2} \sqrt{\frac{\lambda+1}{\lambda-1}} \right) \le 0; \label{eq:double_partial_Tbar}\\
			\textstyle  \pdydx{\bar{\varrho}_d}{R}=& \textstyle \sum_{j=1}^{N_d} \varrho_{j}^{\delta-2} \left (\tilde{\varrho}_{\delta}\right)^{\frac{1}{\delta}-1} (R-R_{\Lambda j});\\
			\textstyle  \doublepdydx{\bar{\varrho}_d}{R}=& \textstyle  \sum_{j=1}^{N_d}  \left (\tilde{\varrho}_{\delta}\right)^{\frac{1}{\delta}-1}\varrho_{j}^{\delta-2}  \big \{ (\frac{1}{\delta}-1)  \frac{(R-R_{\Lambda j})}{(\tilde{\varrho}_{\delta}) } \pdydx{\bar{\varrho}_d}{R} \nonumber\\ 
			& \textstyle  1 + (\delta-2)\varrho_{j}^{-2}  {(R-R_{\Lambda j})^2}  \big \}, \label{eq:double_partial2}
		\end{align}
		\ese
		where $R_{\Lambda j}=R_j \Lambda(\phi_{ac}-\phi_{dj}, \theta_{ac},\theta_{dj})$.
		Let $R^*$ be such that  $\pdydx{\bar{\varrho}_d}{R}\vert_{R=R^*}=0$. We have that $\varrho_j$ is a convex function of $R$ which implies that its $\ell_\delta$-norm, $\tilde{\varrho}_{\delta}$, is also a convex function \cite{boyd2004convex}. This means $\tilde{\varrho}_{\delta}(R^*)$ is the minimum value of $\tilde{\varrho}_{\delta}$, i.e., $\tilde{\varrho}_{\delta}\ge \tilde{\varrho}_{\delta}(R^*)$. Since not all defenders are co-located $\tilde{\varrho}_{\delta}(R^*)>0$ implying $\tilde{\varrho}_{\delta}>0$ and $\lambda >1$. From Eq.~\eqref{eq:double_partial2}, we have $\doublepdydx{\bar{\varrho}_d}{R}\vert_{R=R^*} > 0$. Then from Eq.~\eqref{eq:partial_g}, we get $\doublepdydx{g}{R}\vert_{R=R^*}>0$. We know that $\varrho_j$ is a twice continuously differentiable function of $R$ for $R>0$ and if we choose $\delta \ge 2$ then we can show that both $\pdydx{\bar{\varrho}_d}{R}$ and $\doublepdydx{\bar{\varrho}_d}{R}$ are continuous functions of $R$. From Eq.~\eqref{eq:partial_Tbar} and~\eqref{eq:double_partial_Tbar}, we have that $\pdydx{\bar{T}_d}{\bar{\varrho}_d}$ and $\doublepdydx{\bar{T}_d}{\bar{\varrho}_d}$ are continuous functions of $R$. This implies that $\doublepdydx{g}{R}$ is continuous at $R=R^*$.
		
		Combining the two results that $\doublepdydx{g}{R}$ is continuous  and greater than 0 at $R=R^*$ implies that there exists $\epsilon>0$ such that $\doublepdydx{g}{R}>0$ for all $R$ satisfying $|R-R^*|<\epsilon$, i.e., $g(R)$ is locally convex in the neighborhood of $R=R^*$ and so is $f(R)$.
	\end{proof}
	One can find $\underline{R}_{ac}$ by solving the convex optimization \eqref{prob:find_R} with $R=R^*$, the minimizer of $\tilde{\varrho}_{\delta}(R)$, as an initial guess to a gradient descent algorithm with sufficiently small step size.
	
	Given the direction from which the attackers are approaching the protected area, one can solve the problem in \eqref{prob:find_R} to assess, at least conservatively, whether the defenders can gather in the attackers' path before the attackers, without solving the actual, computationally heavy iterative MIQP formulation \cite{chipade2020swarmherding}. Figure \ref{fig:DominanceRegions} shows the boundaries $\partial Dom_{est}$ and $\partial Dom$ of the estimate $Dom_{est}$ and the dominance region $Dom$, respectively. Here $\partial Dom_{est}$ is obtained by solving a simple quadratic program \eqref{prob:find_R} while $\partial Dom$ is obtained by numerically evaluating the iterative MIQP for each direction. The regions outside of the closed boundaries $\partial Dom_{est}$ and $\partial Dom$ are, respectively, $Dom_{est}$ and $Dom$, computed for the case where the defenders are at given locations (blue circles). On the other hand, the set inside the boundaries $\partial Dom_{est}$ and $\partial Dom$ are the complement sets $Dom_{est}^c=\bR^3 \backslash Dom_{est}$ and $Dom^c=\bR^3 \backslash Dom$, respectively. The set $Dom^c$ is essentially the dominance region of the attackers, i.e., the attackers can reach the protected area before the defenders can gather on their path if the attackers start inside $Dom^c$. Note that the estimate $Dom_{est}$ is completely contained in the dominance region $Dom$. The region $Dom$ is larger on the side where the density of the defenders is larger. This is intuitive because many defenders have to travel less when the attackers approach from this side and hence allow defenders to gather on the expected path of the attackers in time even if the attackers start more closer to the protected area on this side. We have the following result.
	\begin{figure*}[h]
		\centering
		\setlength{\abovecaptionskip}{2pt plus 3pt minus 2pt}
		\setlength{\belowcaptionskip}{-10pt plus 3pt minus 2pt}
		\includegraphics[width=.7\linewidth,trim={4cm 3.7cm 7.7cm 3.8cm},clip]{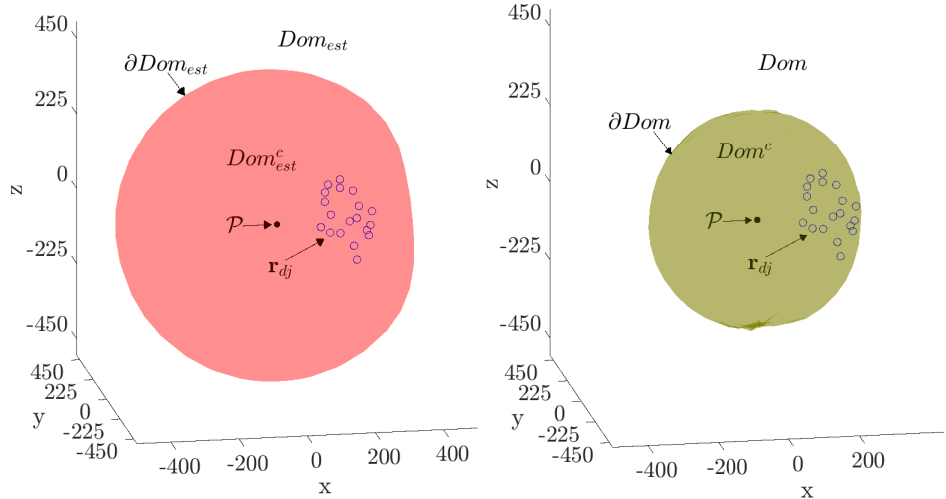}
		\caption{Dominance regions of the players (right: actual dominance region, left: estimate of the dominance region)}
		\label{fig:DominanceRegions}
	\end{figure*}
	
	\begin{theorem}\label{thm:dominance_regions}
		Consider a group of defenders $\calD=\{\calD_1,\calD_2,...\calD_{N_{dc}}\}$ starting at given locations $\mathbf{R}_{d}=[\mathbf{r}_{d1},\mathbf{r}_{d2},...,\mathbf{r}_{dN_d}]$ and a swarm of Attackers $\calA$ with maximum connectivity radius $\bar{\rho}_{ac}$. The defenders in $\calD$ are guaranteed to achieve a planar formation $\scriptF_{pl}^g$, located at a position on the shortest path from the center of mass of the attackers in $\calA$ to the protected area $\calP$, $\Delta T_d^g \;s$ before the attackers reach that position, if the attackers start inside $Dom_{est}(\mathbf{R}_{d},\bar{\rho}_{ac},\Delta T_d^g)$
	\end{theorem}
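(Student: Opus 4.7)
The plan is to show that whenever the attackers start in $Dom_{est}(\mathbf{R}_d,\bar{\rho}_{ac},\Delta T_d^g)$, one can exhibit a particular gathering center $\mathbf{r}_{df^g}$ on the segment joining $\mathbf{r}_p$ and $\mathbf{r}_{ac}$ at which the open-3D-StringNet with planar formation $\scriptF_{pl}^g$ is reached by the defenders at least $\Delta T_d^g$ seconds before any attacker can get within $\bar{\rho}_{ac}$ of that same point. All ingredients for the argument have been set up already: the attackers' time lower bound $(R_{ac}-\bar{\rho}_{ac}-R)/\bar{v}_a$, the defenders' time upper bound $\bar{T}_d(R)$ via \eqref{eq:maxDefenderTime}, and the local convexity of $f(R)$.

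First I would fix the direction $(\phi_{ac},\theta_{ac})$ and reduce the problem to the one-dimensional condition \eqref{eq:AttackDefendTimeDiff1}: by the definition of $Dom_{est}$ together with \eqref{prob:find_R}, being inside $Dom_{est}$ means $R_{ac}\ge \underline{R}_{ac}=\min_{R>\rho_p} f(R)$. Let $R^\star=\argmin_{R>\rho_p} f(R)$, whose existence follows from the local convexity established in the preceding Lemma (and the fact that $f(R)\to\infty$ as $R\to\infty$), and pick the candidate gathering center $\mathbf{r}_{df^g}$ at spherical coordinates $(R^\star,\phi_{ac},\theta_{ac})$ on the attackers' shortest path to $\calP$.

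Next I would chain the two time bounds. From $R_{ac}\ge f(R^\star)=\bar{\rho}_{ac}+R^\star+\bar{v}_a\bigl(\bar{T}_d(R^\star)+\Delta T_d^g\bigr)$ I rearrange to
\[
\frac{R_{ac}-\bar{\rho}_{ac}-R^\star}{\bar{v}_a}\ \ge\ \bar{T}_d(R^\star)+\Delta T_d^g.
\]
The left-hand side is a lower bound on $T_a(\mathbf{r}_{ac},\mathbf{r}_{df^g},\bar{\rho}_{ac})$ because the attackers' speed is bounded by $\bar{v}_a$ and they must cover at least $\norm{\mathbf{r}_{ac}-\mathbf{r}_{df^g}}-\bar{\rho}_{ac}=R_{ac}-R^\star-\bar{\rho}_{ac}$ along the line of sight. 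For the right-hand side, the Proposition gives $\bar{\varrho}\le \tilde{\varrho}_\delta$, so the longest distance any defender must travel under the optimal defender-goal assignment is upper bounded by $\bar{\varrho}_d=\tilde{\varrho}_\delta+\rho_{sn,p}$; substituting this into the time-optimal double-integrator-with-drag travel-time formula \eqref{eq:maxDefenderTime} yields $T_d(\mathbf{R}_d,\scriptF_{pl}^g(\mathbf{r}_{df^g},\mathbf{q}_{ac}))\le \bar{T}_d(R^\star)$. Combining the two inequalities gives $T_a-T_d\ge \Delta T_d^g$, which is precisely the dominance condition at $\mathbf{r}_{df^g}$.

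Finally I would verify the feasibility constraint $\upsilon=R^\star/R_{ac}\in(\rho_p/R_{ac},\,1-\bar{\rho}_{ac}/R_{ac})$ appearing in the definition of $Dom$: the lower bound $R^\star>\rho_p$ is enforced by the feasible set of \eqref{prob:find_R}, and the upper bound $R^\star<R_{ac}-\bar{\rho}_{ac}$ follows directly from the inequality $R_{ac}\ge f(R^\star)\ge \bar{\rho}_{ac}+R^\star$. This certifies that $\mathbf{r}_{df^g}$ lies strictly between $\calP$ and the attackers' swarm boundary, so $\mathbf{r}_{ac}\in Dom_{est}\subseteq Dom$ and the gathering succeeds with the required time margin. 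The main subtlety, and the one point I would be careful about, is ensuring the minimizer $R^\star$ truly lies in the relevant convex neighborhood identified in the Lemma (i.e., that the local convexity is enough to justify using a gradient-descent-found $R^\star$ as the global minimizer on $(\rho_p,\infty)$); this is handled by noting that the linear term $R+\bar{v}_a\Delta T_d^g$ dominates $\bar{T}_d(R)$ for large $R$ so $f$ is coercive, and the unique interior critical point given by the local convexity is therefore global.
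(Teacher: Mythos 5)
Your proposal is correct and takes essentially the same route as the paper: the paper's own proof is just the terse observation that, by construction, $Dom_{est}(\mathbf{R}_{d},\bar{\rho}_{ac},\Delta T_d^g)\subseteq Dom(\mathbf{R}_{d},\bar{\rho}_{ac},\Delta T_d^g)$ together with the definition of $Dom$, and your argument is exactly that construction spelled out (chaining the attacker time lower bound $(R_{ac}-\bar{\rho}_{ac}-R)/\bar{v}_a$ with the defender time upper bound $\bar{T}_d$ at the chosen gathering radius, plus the $\upsilon$-feasibility check). One small remark: your closing claim that local convexity makes the critical point the global minimizer overreaches what the Lemma gives, but it is also unnecessary --- the containment argument works at whatever $R$ defines the boundary of $Dom_{est}$ (a non-global minimizer only makes the estimate more conservative), so the theorem is unaffected.
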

	\begin{proof}
		By construction, $Dom_{est}(\mathbf{R}_{d},\bar{\rho}_{ac},\Delta T_d^g) \subseteq Dom(\mathbf{R}_{d},\bar{\rho}_{ac},\Delta T_d^g)$. The proof follows from the definition of the dominance region $Dom(\mathbf{R}_{d},\bar{\rho}_{ac},\Delta T_d^g)$.
	\end{proof}
	In other words, Theorem~\ref{thm:dominance_regions} states that for the attackers starting in $Dom_{est}(\mathbf{R}_{d},\bar{\rho}_{ac},\Delta T_d^g)$ the defenders are guaranteed to gather in their shortest path to the protected area in time. However, if the attackers do not start in $Dom_{est}(\mathbf{R}_{d},\bar{\rho}_{ac},\Delta T_d^g)$ nothing can be concretely said about the gathering of the defenders based on the above approximate analysis.
	
	\section{Simulations}\label{sec:simulations} 
		In this section, 20 defending agents are deployed in a three-dimensional obstacle-free environment and they aim to protect the area $\calP$ by herding an adversarial swarm of 6 attackers to $\calS$. $\mathcal{B}_{\rho_{ac}}(\mathbf{r}_{ac})$ represents the connectivity region of attackers with radius $\rho_{ac}$. Fig.~\ref{fig:3DSwarmHerd1} shows that a circular planar formation is formed at the desired position facing towards the adversarial swarm. As observed in Fig.~\ref{fig:3DSwarmHerd2}, the planar formation gradually transforms into the hemispherical StringNet while tuning its attitude so that the hemispherical formation can be formed in a good position. After the hemispherical formation is constructed, the closed-3D-StringNet formation is quickly established and thus all of the attackers are contained, as shown in Fig.~\ref{fig:3DSwarmHerd3}. In Fig.~\ref{fig:3DSwarmHerd4}, the closed-3D-StringNet herds all the enclosed attackers directly towards the safe area. All the enclosed attackers are taken inside the safe area and the herding is completed. Video of the simulation can be found at \href{https://drive.google.com/drive/folders/1zxMXK5tpotrSKiTkH5yUyKU7dPL1ilnF?usp=sharing}{https://tinyurl.com/yyoonbd8}
	
	\begin{figure*}[t]
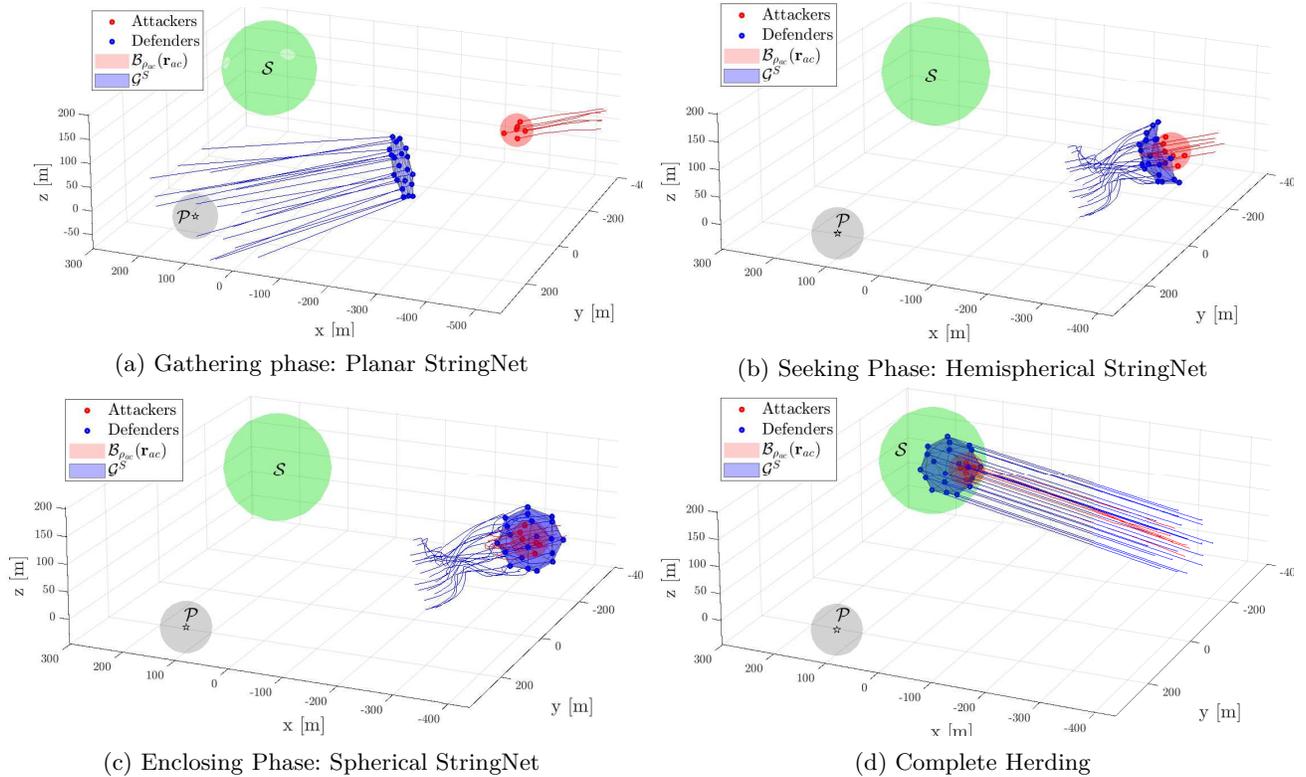

		\centering
		\begin{subfigure}[h]{0.48\linewidth}
			\centering
			\includegraphics[width=1\linewidth,trim={.1cm 2.3cm 2.cm 3.3cm},clip]{simu1.eps}
			\caption{Gathering phase: Planar StringNet}
			\label{fig:3DSwarmHerd1}
		\end{subfigure}	
		\begin{subfigure}[h]{0.48\textwidth}
			\centering
			\includegraphics[width=1\linewidth,trim={.9cm 2.3cm 2.cm 3.4cm},clip]{simu2.eps}
			\caption{Seeking Phase: Hemispherical StringNet}
			\label{fig:3DSwarmHerd2}
		\end{subfigure}
		\begin{subfigure}[h]{0.48\textwidth}
			\centering
			\includegraphics[width=1\linewidth,trim={.9cm 2.3cm 2cm 3.0cm},clip]{simu3.eps}
			\caption{Enclosing Phase: Spherical StringNet}
			\label{fig:3DSwarmHerd3}
		\end{subfigure}
		\begin{subfigure}[h]{0.48\textwidth}
			\centering
			\includegraphics[width=1\linewidth,trim={.9cm 2.3cm 2cm 3.0cm},clip]{simu4.eps}
			\caption{Complete Herding}
			\label{fig:3DSwarmHerd4}
		\end{subfigure}
		\caption{Snapshots of the paths of the agents during 3D-StringNet Herding}		\label{fig:multiSwarmHerd}
	\end{figure*}
	
\vspace{-1mm}
	\section{Conclusions} \label{sec:conclusions}
	We extended our 2D StringNet herding approach to 3D environments by defining the concept of 3D-StringNet. We designed three types of 3D-StringNet formations to capture and herd the attackers with the minimum number of defenders. The closed formation is a uniformly distributed spherical formation that can restrict the attackers' motion and herd them to the safe area. The other two formations: planar and hemispherical formation are generated from the spherical formation by using two carefully chosen mapping functions that respect the conditions on the edges in the formations. Appropriate modifications to the 2D herding control laws are provided for it to be applicable to 3D. The simulation shows the effectiveness of the proposed 3D-StringNet herding approach.
	
	Furthermore, we also provide a convex optimization formulation to quickly determine if a group of defenders starting at given positions can gather at a specified formation centered at a location on the shortest path of the attackers to the protected area before any attacker reaches the center of the formation. 
	 
	


	
	
	\bibliographystyle{IEEEtran}
	\bibliography{ACC2021_Refs}
\end{document}